\documentclass[a4paper,UKenglish,cleveref, autoref]{lipics-v2019}
\usepackage{amsmath}
\usepackage{amsthm}

\hideLIPIcs
\nolinenumbers

\newcommand{\enc}[1]{{\mathcal{E}}({#1})}
\newcommand{\doot}[2]{\left\langle #1, #2\right\rangle}
\newcommand{\dec}[1]{{\mathcal{D}}({#1})}
\newcommand{\R}{\mathbb{R}}
\newcommand{\N}{\mathbb{N}}
\renewcommand{\S}{\mathbb{S}}

\renewcommand{\epsilon}{\varepsilon}

\newcommand{\Osymbol}{{O}}
\newcommand{\BO}[1]{\Osymbol\left(#1\right)}

\newcommand{\BOx}[1]{\Osymbol(#1)}

\bibliographystyle{plainurl}
\title{The space complexity of inner product filters}

\author{Rasmus Pagh}{IT University of Copenhagen, Denmark \and BARC, Copenhagen, Denmark}{pagh@itu.dk}{https://orcid.org/0000-0002-1516-9306
}{Supported by grant 16582, Basic Algorithms Research Copenhagen (BARC), from the VILLUM Foundation.}
\author{Johan Sivertsen}{IT University of Copenhagen, Denmark}{johanvts@gmail.com}{https://orcid.org/0000-0002-3226-4209}{}
\authorrunning{R. Pagh and J. Sivertsen}
\Copyright{R. Pagh and J. Sivertsen}

\begin{CCSXML}
<ccs2012>
<concept>
<concept_id>10003752.10003809.10010031</concept_id>
<concept_desc>Theory of computation~Data structures design and analysis</concept_desc>
<concept_significance>500</concept_significance>
</concept>
<concept>
<concept_id>10003752.10003809.10010055</concept_id>
<concept_desc>Theory of computation~Streaming, sublinear and near linear time algorithms</concept_desc>
<concept_significance>500</concept_significance>
</concept>
</ccs2012>
\end{CCSXML}

\ccsdesc[500]{Theory of computation~Data structures design and analysis}
\ccsdesc[500]{Theory of computation~Streaming, sublinear and near linear time algorithms}

\keywords{Similarity, estimation, dot product, filtering}

\funding{The work leading to this research has received funding from the European Research Council under the European Union’s 7th Framework Programme (FP7/2007-2013) / ERC grant agreement no.~614331.}

\acknowledgements{We would like to thank Thomas Dybdahl Ahle for suggesting a simple proof for an inequality in section~\ref{sec:lower}, and the anonymous reviewers for many constructive suggestions, improving the presentation. Part of this work was done while the first author was visiting Google Research.}

\EventEditors{Carsten Lutz and Jean Christoph Jung}
\EventNoEds{2}
\EventLongTitle{23rd International Conference on Database Theory (ICDT 2020)}
\EventShortTitle{ICDT 2020}
\EventAcronym{ICDT}
\EventYear{2020}
\EventDate{March 30--April 2, 2020}
\EventLocation{Copenhagen, Denmark}
\EventLogo{}
\SeriesVolume{155}
\ArticleNo{16}

\begin{document}

\maketitle

\begin{abstract}
Motivated by the problem of filtering candidate pairs in inner product similarity joins we study the following \emph{inner product estimation} problem:
  Given parameters $d\in\N$, $\alpha>\beta\geq 0$ and unit vectors $x,y\in \R^{d}$ consider the task of distinguishing between the cases $\langle x, y\rangle\leq\beta$ and $\langle x, y\rangle\geq \alpha$ where $\langle x, y\rangle = \sum_{i=1}^d x_i y_i$ is the inner product of vectors $x$ and $y$.
	The goal is to distinguish these cases based on information on each vector encoded independently in a bit string of the shortest length possible.
	In contrast to much work on compressing vectors using randomized dimensionality reduction, we seek to solve the problem \emph{deterministically}, with no probability of error.
  Inner product estimation can be solved in general via estimating $\langle x, y\rangle$  with an additive error bounded by $\varepsilon = \alpha - \beta$.
  We show that $d \log_2 \left(\tfrac{\sqrt{1-\beta}}{\varepsilon}\right) \pm \Theta(d)$ bits of information about each vector is necessary and sufficient.
  Our upper bound is constructive and improves a known upper bound of $d \log_2(1/\varepsilon) + O(d)$ by up to a factor of~2 when $\beta$ is close to~$1$.
  The lower bound holds even in a stronger model where one of the vectors is known exactly, and an arbitrary estimation function is allowed.
\end{abstract}

\section{Introduction}

Modern data sets increasingly consist of noisy or incomplete information, which means that traditional ways of matching database records often fall short.
One approach to dealing with this in database systems is to provide \emph{similarity join} operators that find pairs of tuples satisfying a similarity predicate.
We refer to the book of Augsten and B{\"o}hlen~\cite{DBLP:series/synthesis/2013Augsten} for a survey of similarity joins in relational database systems, and to~\cite{conf/icdt/Pagh15} for an overview of theoretical results in the area.
Note that joins can be implemented using similarity search \emph{indexes} that allow searching a relation for tuples that satisfy a similarity predicate with a given query~$q$.
Thus we include works on similarity search indexes in discussion of previous work on similarity join.

In this paper we consider \emph{inner product similarity} predicates of the form $\langle x, y\rangle /geq \alpha$, where $x,y\in\R^{d}$ are real-valued vectors, i.e., the predicate is true for vectors whose inner product $\sum_{i=1}^d x_i y_i$ exceeds a user-specified threshold~$\alpha$.
This notion of similarity is important, for example, in neural network training and inference, (see~\cite{DBLP:conf/kdd/SpringS17}).
In that context, an inner product similarity join can be used for classifying a collection of vectors according to a neural network model.

Inner product similarity join is a special case of similarity join under Euclidean distance, implemented for example in the Apache Spark framework\footnote{\url{https://spark.apache.org/}}.
Conversely, it generalizes \emph{cosine similarity}, which has been studied for more than a decade (see, e.g.,~the influential papers~\cite{bayardo2007scaling, DBLP:conf/stoc/Charikar02} and more recent works such as~\cite{anastasiu2014l2ap,DBLP:conf/icassp/BaluFJ14,DBLP:conf/icde/ChristianiPS18,satuluri2012bayesian}). In recent years the general inner product similarity join problem has attracted increasing attention (see e.g.~\cite{ahle2016complexity,DBLP:conf/kdd/RamG12,DBLP:conf/nips/Shrivastava014,teflioudi2017exact,DBLP:conf/nips/YanLDCC18}).
Recently proposed practical inner product similarity join algorithms work by reducing the general problem to a number of instances with unit length vectors, which is equivalent to join under cosine similarity~\cite{DBLP:conf/nips/YanLDCC18}.

\medskip

{\bf Candidate generation approach.}
State-of-the-art algorithms computing similarity joins on high-dimensional vectors use a two-phase approach:
\begin{enumerate}
\item Generate a set of \emph{candidate pairs} $(x,y)$ that contains all pairs satisfying the predicate (keeping track of the corresponding tuples in the relations).
\item Iterate over the candidate pairs to \emph{check} which ones satisfy the predicate.
\end{enumerate}
Suppose for simplicity that both relations in the similarity join contain~$n$ tuples.
A na\"ive candidate generation phase would output all $n^2$ corresponding pairs of vectors.
For many data sets it is possible to reduce the number of candidate pairs significantly below $n^2$, but the check phase remains a bottleneck.
A direct implementation of the check phase would require full information about the vectors $(x,y)$, in practice $d$ floating point numbers per vector.
Though the inner product computation is trivial, for high-dimensional vectors the cost of transferring data from memory can be a bottleneck.

\medskip

{\bf Filtering candidate pairs using approximation.}
An approach to reducing communication is to \emph{approximate} inner products, which is enough to handle those candidate pairs that do not have inner product close to the threshold $\alpha$.
The exact inner product is computed only for the remaining pairs, often a small fraction of the set of all candidates.
We stress that globally, the join computation we consider is not approximate, but approximations are used to speed up parts of the algorithm.
(Note that under common assumptions in fine-grained complexity, the inner product similarity join problem is difficult in the worst case, even with approximation~\cite{DBLP:conf/focs/AbboudRW17,ahle2016complexity}.)

Such additional filtering of candidate pairs has been successfully used in ``Monte Carlo'' style randomized algorithms that allow the algorithm to sometimes fail to identify a pair satisfying the predicate, e.g.~\cite{satuluri2012bayesian,DBLP:conf/esa/0001CPV19}.
While the error in Monte Carlo algorithms can usually be made very small at a reasonable computation cost, such algorithms are not suitable in all settings. For example:
\begin{itemize}
\item Firm guarantees may be needed to comply with regulation, or to ensure a clear and consistent semantics of a system (such as a DBMS) in which the similarity estimation algorithm is part.
\item Guarantees on accuracy are shown under the assumption that the input data is independent of the random choices made by the algorithm. Technically this assumption may not hold if output from the algorithm can affect future inputs. Maybe more seriously, if vectors can be chosen adversarially based on (partial) knowledge of the randomness of the algorithm, for example obtained by timing attacks, the guarantees cease to hold (see e.g.~\cite{DBLP:conf/ccs/ClaytonPS19} for more discussion of adversarial settings).
\end{itemize}

In this paper we study what kind of approximation is possible without randomization, targeting settings where false negatives are not permitted, or where we cannot ensure that inputs are independent of any randomization used by the algorithm.

We seek to efficiently eliminate all candidate pairs that have inner product less than~$\beta$, for some $\beta$ smaller than the threshold $\alpha$, so that the number of remaining candidate pairs (for which an expensive inner product computation must be done) may be significantly reduced.
In order to not eliminate any candidate pair passing the threshold it is necessary and sufficient that the approximation is strong enough to distinguish the cases $\langle x, y\rangle\leq\beta$ and $\langle x, y\rangle\geq \alpha$.

\medskip

{\bf Similarity join memory bottlenecks.}
The complexity of similarity joins in the I/O model was studied in~\cite{DBLP:conf/esa/PaghPSS15}, which assumes that a block transfer moves $B$ vectors from or to external memory, and that internal memory can hold $M$ vectors.
Reducing the amount of data that needs to be transferred to evaluate a similarity predicate leads to a larger capacity of blocks as well as internal memory, in turn leading to a reduction in I/O complexity that is roughly proportional to the reduction in size.
The exact improvement is a bit more complicated because additional I/Os are needed to evaluate the exact inner products of pairs with similarity above $\beta$.
McCauley and Silvestri~\cite{mccauley2018adaptive} studied the related problem of similarity joins in MapReduce where considerations similar to the I/O model can be made.

\subsection{Our results}

Without loss of generality we can consider \emph{unit vectors}, since the general estimation problem can be reduced to this case by storing an (approximate) norm of each vector in space independent of the number of dimensions. Similarly, lower bounds shown for unit vectors imply lower bounds for arbitrary vector lengths by a scaling argument

We study the following version of the inner product estimation problem for unit vectors: Distinguish inner products smaller than $\beta$ from inner products larger than $\alpha$, for threshold parameters $\alpha$ and $\beta$.
This problem can of course be solved by estimating the inner product with additive error less than $\alpha - \beta$.
However, we will see that the number of bits needed is not a function of $\alpha - \beta$, and that guarantees can be improved when these parameters have values close to~1.

Let $x$ and $y$ be vectors from the $d$-dimensional Euclidean unit sphere $\S^{d-1}$.
When represented in a computer with limited precision floating or fixed-point numbers, the precision we can obtain when computing the inner product $\doot x y$ will of course depend on the precision of the representation of $x$ and $y$.
Suppose we round coordinates $x$ and $y$ to the nearest integer multiple of $\varepsilon / d$, for some parameter $\varepsilon > 0$, to produce ``uniformly quantized'' vectors $x'$ and $y'$.
Then it is easy to see that the difference between $\doot x y$ and $\doot{x'}{y'}$ is at most~$\varepsilon$.
The space required to store each coordinate $x'_i, y'_i \in [-1,+1]$ is $\lceil \log_2(2d/\varepsilon)\rceil$ bits, so we get $d\log_2(d/\varepsilon) + \BOx{d}$ bits in total using standard, uniform quantization.
 On the upper bound side we know that the number of bits per dimension can be made independent of~$d$.
 The following lemma appears to be folklore --- a proof can be found in~\cite[Theorem 4.1]{Alon2017optimal}. 
\begin{lemma}\label{thm:main-upper}    For every $\varepsilon > 0$ there exists a mapping $\mathcal{E}: S^{d-1} \rightarrow \{0,1\}^\ell$, where $\ell = d\log_2(1/\varepsilon) + \BOx{d}$ such that $\doot x y$ can be estimated from $\enc{x}$ and $\enc{y}$ with additive error at most $\varepsilon$.
 \end{lemma}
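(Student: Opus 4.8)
The plan is to prove the lemma by a covering (net) argument rather than by coordinate-wise quantization, since the latter, as the preceding paragraph shows, introduces a spurious $\log_2 d$ factor that we want to avoid. First I would fix $\eta = \varepsilon/2$ and take a minimal $\eta$-net $N \subseteq \S^{d-1}$, i.e.\ a set of points lying on the sphere such that every $x \in \S^{d-1}$ is within Euclidean distance $\eta$ of some net point. The encoding $\enc{x}$ simply records the index (in a fixed enumeration of $N$) of a nearest net point to $x$; decoding recovers that point, which I denote $x'$. The estimate for $\doot{x}{y}$ is then $\doot{x'}{y'}$, the inner product of the two recovered net points.

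The heart of the argument is the classical volumetric bound on the covering number of the sphere: a minimal $\eta$-net of $\S^{d-1}$ has size at most $(1 + 2/\eta)^d$. I would recall the short packing proof --- a maximal $\eta$-separated subset is automatically an $\eta$-net, and placing disjoint balls of radius $\eta/2$ around its points inside a ball of radius $1 + \eta/2$ and comparing volumes yields the bound. Taking $\eta = \varepsilon/2$ gives $|N| \le (1 + 4/\varepsilon)^d$, so the index requires $\ell = \lceil \log_2 |N|\rceil \le d\log_2(1 + 4/\varepsilon) = d\log_2(1/\varepsilon) + \BOx{d}$ bits, matching the claimed length.

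It then remains to verify the accuracy guarantee. Writing $\doot{x}{y} - \doot{x'}{y'} = \doot{x - x'}{y} + \doot{x'}{y - y'}$ and applying Cauchy--Schwarz together with $\|x'\| = \|y\| = 1$, I bound the absolute error by $\|x - x'\| + \|y - y'\| \le \eta + \eta = \varepsilon$. The only point needing care is that the net points lie exactly on the sphere, so that $\|x'\| = 1$; this is automatic since we choose $N \subseteq \S^{d-1}$, and the estimator is therefore correct to additive error $\varepsilon$ as required.

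I do not anticipate a serious obstacle: the covering-number bound is standard, and the error analysis is a one-line application of the triangle inequality and Cauchy--Schwarz. The only genuinely delicate point is cosmetic --- confirming that the slack $\log_2(1 + 4/\varepsilon) - \log_2(1/\varepsilon)$ is bounded by a constant uniformly over the relevant range of $\varepsilon$, which folds cleanly into the $\BOx{d}$ term.
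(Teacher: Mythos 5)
Your proof is correct, but it takes a genuinely different route from the paper's. The paper obtains Lemma~\ref{thm:main-upper} as a corollary of Theorem~\ref{thm:distinguish}: each coordinate is rounded to an integer multiple of $\delta/\sqrt{d}$ and the result renormalized, and the space bound comes not from a covering number but from an explicit encoding of the rounded integer vector, using $\sum_{i=1}^d |z_i| \leq d/\delta + d/2$ and a stars-and-bars count ${s+d \choose d}$ with $s=\lfloor d/\delta + d/2\rfloor$ to get $d\log_2(1/\delta)+\BOx{d}$ bits; accuracy is established via the distance-distortion bound of Lemma~\ref{lem:distortion} combined with the polarization identity~(\ref{eq:length2inner}), and the lemma follows by taking $\delta=\varepsilon/4$. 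You instead invoke a minimal $(\varepsilon/2)$-net of the sphere with the standard volumetric bound $(1+2/\eta)^d$ and a one-line telescoping $|\doot{x}{y}-\doot{x'}{y'}| \leq \|x-x'\|_2+\|y-y'\|_2$ via Cauchy--Schwarz; this is shorter, self-contained, and essentially the folklore argument the paper attributes to~\cite{Alon2017optimal} (the paper itself remarks that its grid is an implementation of such a net, ``though the analysis is different''). What your route gives up is constructivity and sharpness: a minimal net is only shown to exist and nearest-net-point encoding is not efficiently computable, whereas the paper's grid code has explicit, fast encoding and decoding (exploited in its experiments), and the same construction directly yields the finer bound $d\log_2\bigl(\sqrt{1-\beta}/(\alpha-\beta)\bigr)+\BOx{d}$ of Theorem~\ref{thm:distinguish}, which a net at the single scale $\varepsilon/2$ does not recover --- your argument corresponds to the worst case $\beta=0$. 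Your final cosmetic point is handled correctly: for $\varepsilon \geq 1$ zero bits suffice (the constant estimate $0$ has error at most $1$), and for $\varepsilon < 1$ the slack $\log_2(1+4/\varepsilon)-\log_2(1/\varepsilon) \leq \log_2 5$ folds into the $\BOx{d}$ term as you claim.
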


 In this paper we ask if this space usage is optimal for the problem of distinguishing between two specific inner product values.
 Our methods will work through a decoding function that produces a unit vector from a bit representation (i.e., the approximation is a result of quantizing the input vectors).
 Specifically, we consider the following problem:

   \begin{definition}		 For positive integers $d$ and $\ell$, and $\alpha,\beta\in [0,1]$ with $\alpha>\beta$ the $(\alpha,\beta,d,\ell)$-{\sc InnerProduct} problem is to construct mappings $\mathcal{E}: \S^{d-1}\rightarrow \{0,1\}^{\ell}$ and $\mathcal{D}: \{0,1\}^\ell\rightarrow \S^{d-1}$ and choose $t\in\R$, such that for every choice of unit vectors $x,y\in \S^{d-1}$ we have:
  \begin{align*}
    \doot xy\geq\alpha &\implies \doot {\dec{\enc{x}}} {\dec{\enc{y}}}\geq t\text{ and }\\
    \doot xy\leq\beta &\implies \doot {\dec{\enc{x}}} {\dec{\enc{y}}} < t \enspace .
  \end{align*}
  We refer to the parameter $\ell$ as the \emph{space usage} of a construction.
	Whenever $d$ and $\ell$ are understood from the context we omit them and talk about the $(\alpha,\beta)$-{\sc InnerProduct} problem.
\end{definition}

On the upper bound side our main technical lemma is the following:
\begin{theorem} \label{thm:distinguish}
  $(\alpha,\beta)$-{\sc InnerProduct} can be solved using space $\ell=d\log_2\left(\frac {\sqrt{1-\beta}}{\alpha-\beta}\right)+\BOx{d}$.
\end{theorem}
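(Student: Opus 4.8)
The plan is to work with Euclidean distances rather than inner products, exploiting the fact that for unit vectors the map $s\mapsto\sqrt{2(1-s)}$ sending an inner product $s=\doot xy$ to the distance $\|x-y\|$ is steep near $s=1$: a fixed gap in inner product then corresponds to a much larger gap in distance once $\beta$ is close to $1$. Concretely, I would set $a=\sqrt{2(1-\alpha)}$ and $b=\sqrt{2(1-\beta)}$, so that $\doot xy\geq\alpha\iff\|x-y\|\leq a$ and $\doot xy\leq\beta\iff\|x-y\|\geq b$. The whole construction then reduces to distinguishing $\|x-y\|\leq a$ from $\|x-y\|\geq b$ after independently quantizing $x$ and $y$.

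For the quantization I would fix an $\eta$-net $N\subseteq\S^{d-1}$, define $\dec{\enc x}$ to be a nearest net point to $x$, and let $\enc x$ be its index. A standard volume argument gives a net of size at most $(3/\eta)^d$, so this uses $\ell=d\log_2(1/\eta)+\BOx d$ bits. Writing $x'=\dec{\enc x}$ and $y'=\dec{\enc y}$, the triangle inequality gives $\bigl|\,\|x'-y'\|-\|x-y\|\,\bigr|\leq 2\eta$, so the decoded distance determines the true distance up to additive error $2\eta$. Hence it suffices to choose $\eta$ with $4\eta<b-a$; then any threshold $T$ in the open interval $(a+2\eta,\,b-2\eta)$ separates the two cases (the strict versus non-strict inequalities in the problem statement cause no trouble since $T$ lies strictly inside the interval), and setting $t=1-T^2/2$ translates this distance threshold back into the required inner-product threshold on $\doot{x'}{y'}$.

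It remains to bound the gap $b-a$ from below, which is the crux of the argument and the source of the $\sqrt{1-\beta}$ factor. Using the identity $\sqrt{1-\beta}-\sqrt{1-\alpha}=\frac{\alpha-\beta}{\sqrt{1-\beta}+\sqrt{1-\alpha}}$ together with $\sqrt{1-\alpha}\leq\sqrt{1-\beta}$ (as $\alpha>\beta$), I obtain
\[
  b-a=\sqrt2\,\frac{\alpha-\beta}{\sqrt{1-\beta}+\sqrt{1-\alpha}}\ \geq\ \frac{\alpha-\beta}{\sqrt{2(1-\beta)}}.
\]
Therefore it is enough to take $\eta$ equal to a fixed constant times $\frac{\alpha-\beta}{\sqrt{1-\beta}}$, and substituting into $\ell=d\log_2(1/\eta)+\BOx d$ yields $\ell=d\log_2\bigl(\tfrac{\sqrt{1-\beta}}{\alpha-\beta}\bigr)+\BOx d$, as claimed. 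The one thing to get right is the direction and steepness of the inner-product-to-distance conversion: the improvement over the folklore bound comes entirely from the amplification of the separating gap, not from any change in how the sphere is covered, since the net is taken over the full sphere in either case.
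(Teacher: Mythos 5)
Your proof is correct, and its quantitative heart coincides with the paper's: your gap amplification $b-a=\sqrt{2}\,\frac{\alpha-\beta}{\sqrt{1-\beta}+\sqrt{1-\alpha}}\geq\frac{\alpha-\beta}{\sqrt{2(1-\beta)}}$ is exactly the same calculation hiding behind the paper's choice $\delta=\frac{\alpha-\beta}{2\sqrt{2-2\beta}}$. Where you genuinely diverge is in the quantizer and the bookkeeping. The paper does not use an abstract net: it rounds each coordinate to integer multiples of $\delta/\sqrt{d}$ and renormalizes (a variant of pyramid vector quantization), proves in Lemma~\ref{lem:distortion} that this map distorts distances by at most an additive $\delta$, and gets the bit count $d\log_2(1/\delta)+\BOx{d}$ by a stars-and-bars count using $\sum_i |z_i|\leq d/\delta+d/2$; the separation argument is then run in inner-product space via the identity $\doot xy=\tfrac12(\|x\|^2+\|y\|^2-\|x-y\|^2)$, yielding error terms $\pm\|x-y\|\delta-\delta^2/2$, whereas you run it in distance space with the two thresholds $a=\sqrt{2(1-\alpha)}$ and $b=\sqrt{2(1-\beta)}$ — these are equivalent computations, and yours is arguably the cleaner presentation of why the $\sqrt{1-\beta}$ factor appears. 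The trade-off is constructiveness: your existential $\eta$-net (maximal packing plus volume bound) proves the theorem as stated, since the definition only demands that the mappings exist, but Alice must then store or search a codebook of size $2^{\Theta(d\log(1/\eta))}$, while the paper's grid encoding is computable coordinate-by-coordinate in essentially linear time — a property the abstract explicitly advertises and the experimental section relies on. Indeed the paper remarks that its technique is ``essentially an implementation of a grid-based $\varepsilon$-net'' in the sense of Alon and Klartag; your argument is precisely the abstract net version of which the paper gives an explicit, efficiently computable instantiation.
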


Theorem~\ref{thm:distinguish} upper bounds the space needed to approximate inner products between unit vectors.
For example we can distinguish pairs with inner product $\alpha=1-\epsilon$ from pairs with inner product less than $\beta=1-2\epsilon $ using space $\frac{d}{2} \log_2{\frac 1 \epsilon }+\BOx{d}$.
The problem is closely linked to estimation, so it is unsurprising that it matches the bound in Lemma~\ref{thm:main-upper} for $\alpha-\beta=\epsilon$ in the worst case of $\beta=0$. What is interesting is that for $\beta$ close to $1$ we get improved constants, saving up to a factor 2 on the space when $\alpha$ approaches~1.

Our proof uses a variant of \emph{pyramid vector quantization}~\cite{fischer1986pyramid} and the technique is essentially an implementation of a grid-based $\varepsilon$-net as described in~\cite{Alon2017optimal}, though the analysis is different.
The exposition is supposed to be self-contained, and in particular we do not assume that the reader is familiar with pyramid vector quantization or $\varepsilon$-nets.

\medskip

Finally, we show a tight lower bound.
Consider a communication protocol where Alice is given $x\in \S^{d-1}$ and Bob is given $y\in \S^{d-1}$.
For parameters $\alpha, \beta \in (0,1)$, with $\alpha = \beta + \varepsilon$, known to both parties, how many bits of information does Alice need to send to Bob in order for Bob to be able to distinguish the cases $\langle x,y\rangle \geq \alpha$ and $\langle x,y\rangle \leq \beta$?
Specifically, how many bits must Alice send, in the worst case over all vectors $x$, to allow Bob to answer correctly for every vector $y$?
We note that a solution for the $(\alpha,\beta)$-{\sc InnerProduct} problem implies a communication protocol using $\ell$ bits, but our lower bound applies more generally to any one-way communication protocol, not necessarily based on quantization.

\begin{theorem}\label{thm:main-lower}   For each choice of $\alpha,\beta\in (0,1)$ with $\alpha > \beta$, suppose that there exists a mapping $\mathcal{E}: S^{d-1} \rightarrow \{0,1\}^\ell$ such that for all $x,y\in S^{d-1}$ we can determine from $\enc{x}$ and $\enc{y}$ whether $\doot x y \leq \beta$ or $\doot x y \geq \alpha$ (or output anything if $\doot x y \in (\beta,\alpha))$.   
   Then $\ell \geq d \log_2 \left(\tfrac{\sqrt{1-\beta}}{\alpha - \beta}\right) - \BOx{d}$.
\end{theorem}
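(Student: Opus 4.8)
The plan is to prove the lower bound by a covering argument on the sphere. Since $\enc{\cdot}$ takes at most $2^\ell$ distinct values, the preimages $C_c=\{x\in\S^{d-1}:\enc{x}=c\}$ partition $\S^{d-1}$ into at most $2^\ell$ \emph{cells}, and the goal is to show each cell has small angular diameter, forcing many cells. The decisive observation is that if $x,x'$ lie in the same cell then $\enc{x}=\enc{x'}$, so for every $y$ the answer computed from $(\enc{x},\enc{y})$ equals the one computed from $(\enc{x'},\enc{y})$; hence there can be \emph{no} witness $y\in\S^{d-1}$ with $\doot x y\geq\alpha$ and $\doot{x'} y\leq\beta$ (nor the symmetric configuration), since such a $y$ would force two different correct answers from identical inputs. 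Note this uses only that the \emph{same} $y$ is presented in both cases, so it applies to the stated model verbatim.

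First I would convert ``no witness exists'' into a bound on the angle $\theta=\theta(x,x')$. Working in the plane spanned by $x$ and $x'$, write $x=e_1$ and $x'=\cos\theta\,e_1+\sin\theta\,e_2$; then minimizing $\doot{x'} y$ over unit $y$ subject to $\doot x y=\alpha$ is achieved at $y=\alpha e_1-\sqrt{1-\alpha^2}\,e_2$, giving minimum value $\alpha\cos\theta-\sqrt{1-\alpha^2}\sin\theta=\cos(\arccos\alpha+\theta)$. A short monotonicity check ($\tfrac{d}{da}\big(a\cos\theta-\sqrt{1-a^2}\sin\theta\big)>0$) shows that relaxing $\doot x y=\alpha$ to $\doot x y\geq\alpha$ cannot lower this value, so a separating witness exists exactly when $\cos(\arccos\alpha+\theta)\leq\beta$. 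Consequently two vectors can share a cell only if
\[ \theta(x,x')<\theta^*:=\arccos\beta-\arccos\alpha , \]
so every cell is contained in a spherical cap of angular radius $\theta^*$.

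Next I would combine a standard cap-measure estimate with counting. Let $\mu$ be the uniform probability measure on $\S^{d-1}$. Using $\int_0^{\theta^*}\sin^{d-2}\phi\,d\phi\leq \tfrac{\sin^{d-1}\theta^*}{(d-1)\cos\theta^*}$ (valid since $\theta^*\leq\arccos\beta<\pi/2$) together with $\int_0^\pi\sin^{d-2}\phi\,d\phi=\BTx{1/\sqrt{d}}$, a cap of angular radius $\theta^*$ has measure at most $\BOx{\sin^{d-1}\theta^*/(\sqrt{d}\,\cos\theta^*)}$. As the at most $2^\ell$ cells cover $\S^{d-1}$, their maximal measure is at least $2^{-\ell}$, and since $\cos\theta^*\geq\cos(\arccos\beta)=\beta$ absorbs the $\cos\theta^*$ factor, this yields
\[ \ell\geq (d-1)\log_2\frac{1}{\sin\theta^*}-\BOx{\log d}. \]

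Finally I would translate $\sin\theta^*$ into the stated expression. Setting $a=\arccos\alpha$, $b=\arccos\beta$ (so $0<a<b<\pi/2$) and using $\alpha-\beta=\cos a-\cos b=2\sin\tfrac{a+b}{2}\sin\tfrac{b-a}{2}$ and $\sqrt{1-\beta}=\sqrt2\,\sin\tfrac b2$, together with $\sin\theta^*=2\sin\tfrac{\theta^*}2\cos\tfrac{\theta^*}2$ and $\theta^*=b-a$, one gets
\[ \frac{\sin\theta^*\,\sqrt{1-\beta}}{\alpha-\beta}=\sqrt2\,\cos\tfrac{\theta^*}{2}\cdot\frac{\sin(b/2)}{\sin((a+b)/2)}\leq\sqrt2 , \]
since $\cos\tfrac{\theta^*}2\leq 1$ and $\sin\tfrac b2\leq\sin\tfrac{a+b}2$. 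Hence $1/\sin\theta^*\geq\tfrac{1}{\sqrt2}\,\sqrt{1-\beta}/(\alpha-\beta)$, and plugging this into the previous display gives $\ell\geq d\log_2\tfrac{\sqrt{1-\beta}}{\alpha-\beta}-\BOx{d}$ after absorbing the $(d-1)$-versus-$d$ gap, the constant $\log_2\sqrt2$, and the $\BOx{\log d}$ term into $\BOx{d}$. I expect the main obstacle to be the geometric step pinning down $\theta^*=\arccos\beta-\arccos\alpha$ as the \emph{exact} confusion threshold and the accompanying trigonometric inequality bounding $\sin\theta^*$; the cap-measure estimate and the counting are routine.
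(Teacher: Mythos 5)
Your proposal is correct, and it reaches the bound by a genuinely different counting mechanism than the paper. The paper argues by \emph{packing}: it greedily constructs (Gilbert--Varshamov style) a code $\mathcal{C}_\Theta\subset\S^{d-1}$ of $\geq (2(1-\cos\Theta))^{-(d-1)/2}/(3\sqrt{d})$ points that are pairwise at angle at least $\Theta=\arccos\beta-\arccos\alpha$, exhibits for each such pair an \emph{explicit} witness $y(x_1,x_2)$ with $\doot{x_1}{y}=\beta$ and $\doot{x_2}{y}\geq\alpha$ (verified by direct computation), and concludes that all codewords receive distinct encodings, so $2^\ell\geq|\mathcal{C}_\Theta|$. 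You argue by \emph{covering}: the fibers of $\mathcal{E}$ must have angular diameter below $\theta^*=\Theta$, hence each lies in a cap of radius $\theta^*$, and at most $2^\ell$ such caps cover the sphere. These are dual arguments resting on the same two pillars --- the confusion threshold $\arccos\beta-\arccos\alpha$ and the cap-measure asymptotics $\approx\sin^{d-1}\Theta$ --- but your route differs in both supporting steps: you derive the threshold by optimizing $\doot{x'}{y}$ over the constraint set rather than writing down the witness (and, correctly, you only need the direction ``$\theta\geq\theta^*$ implies a witness exists,'' which your in-plane $y=\alpha e_1-\sqrt{1-\alpha^2}\,e_2$ gives for all $\theta$, since $\cos(\arccos\alpha+\theta)\leq 0<\beta$ when the argument exceeds $\pi/2$); and you replace the paper's concavity argument for $\Theta\leq\tfrac{\pi}{2}\tfrac{\alpha-\beta}{\sqrt{1-\beta}}$ (the inequality credited to Ahle in the acknowledgements) with an exact product-to-sum computation giving $\sin\theta^*\leq\sqrt{2}\,\tfrac{\alpha-\beta}{\sqrt{1-\beta}}$, a cleaner identity-based bound with a better constant. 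Two small points to tidy up: fibers of an arbitrary $\mathcal{E}$ need not be measurable, so phrase the counting as ``the $\leq 2^\ell$ caps of radius $\theta^*$ (centered at one point of each nonempty fiber) cover $\S^{d-1}$, so some cap has measure $\geq 2^{-\ell}$'' rather than speaking of the measure of a cell; and note that your absorption of $\log_2(1/\cos\theta^*)\leq\log_2(1/\beta)$ and of the $(d-1)$-versus-$d$ gap into $\BOx{d}$ treats $\alpha,\beta$ as fixed --- which matches the theorem's ``for each choice of $\alpha,\beta$'' phrasing and is exactly the same gloss the paper makes with its $(d-1)/2$ exponent, so neither issue is a gap relative to the paper's own standard of rigor.
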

This matches the upper bound up to the additive term of $\BOx{d}$ bits.

\section{Further related work}\label{sec:related}

\subparagraph*{Motivating applications.}
Calculating the inner product of two vectors is a frequently used sub-routine in linear algebra,
and many machine learning algorithms rely heavily on inner product calculation.
For example, the inner loop of algorithms for training of complex neural networks uses millions and millions of inner product computations.
Often what is ultimately learned is an embedding onto a high dimensional unit sphere where the inner product can be used directly as a similarity measure.

In such large scale computations the bottleneck is often the limited bandwidth of the hardware in question, and having slightly smaller vector representations can massively improve the execution time.
This gives rise to the idea of computing inner products with reduced precision.
Recently, several studies showed that deep neural networks can be trained using low precision arithmetic, see e.g.~\cite{chen2015compressing,gupta2015deep,han2015deep}.
This has led to a new generation of software and reduced-precision hardware for machine learning algorithms:
\begin{itemize}
	\item NVIDIA's TensorRT GPU framework and Google's TensorFlow and Tensor Processing Unit, that both operate with 8- or 16-bit fixed point number representations, and 
	\item Intel's Nervana processor that uses the so-called \emph{FlexPoint} vector representation~\cite{flexpoint}, combining 16-bit uniform quantization with a shared exponent that allows representing vectors in a large dynamic range of magnitudes.
\end{itemize}
From a theoretical point of view these hardware representations use at least $\log_2 d \pm O(1)$ more bits per dimension than what is required to ensure a given additive error.

Reduced precision inner products have also been employed in knowledge discovery~\cite{blalock2017bolt} and similarity search~\cite{guo2016quantization,jegou2011product}.

\subparagraph*{Dimensionality reduction.}
There is a large literature on the space complexity of estimating Euclidean distances, usually studied in the setting where a certain failure probability $\delta > 0$ is allowed, and with number of \emph{dimensions} (rather than bits) as the measure of space.
For certain ``random projection'' mappings $f: \R^d\rightarrow \R^D$ one can estimate the Euclidean distance $||x-y||_2$ from $f(x)$ and $f(y)$ up to a multiplicative error of $1+\varepsilon$, with failure probability~$\delta$.
It is known that using $D=\BOx{\log(\delta^{-1})\varepsilon^{-2}}$ dimensions is necessary~\cite{jayram2013optimal,kane2011almost} and sufficient~\cite{johnson1984extensions}.
For unit vectors this implies an approximation of inner products with $\BOx{\varepsilon}$ additive error through the identity 
\begin{equation}\label{eq:length2inner}
\doot x y = \tfrac{1}{2}(||x||_2^2+||y||_2^2-||x-y||_2^2) \enspace .
\end{equation}
Using (a specific type of) random projections to estimate inner products, with an additive error guarantee, is known as ``feature hashing''~\cite{weinberger2009feature}.

Indyk et al.~\cite{indyk2017practical, indyk2017near} considered the bit complexity of representing all distances, up to a given relative error $1+\varepsilon$, within a given set $S$ of $n$ vectors in $\R^d$.
For this problem one can assume without loss of generality that $d=\BOx{\varepsilon^{-2}\log n}$, using dimension reduction.
Suppose that we only need to preserve distances of unit vectors up to an additive $\varepsilon$, which implies that inner products are preserved up to $\BOx{\varepsilon}$.
Then for $d = \varepsilon^{-2}\log n$ the space usage per point of the method described in~\cite{indyk2017near} is $\BOx{d\log(1/\varepsilon)}$.
This is within a constant factor of our upper bound, but not directly comparable to our result which works for all unit vectors.
Recently, Indyk and Wagner~\cite{indyk2018approximate} studied the space required to solve the $d$-dimensional Euclidean $(1+\varepsilon)$-approximate nearest neighbor problem in the setting where vector coordinates are integers in a bounded range (e.g.~of size $n^{\BOx{1}}$).
While this method gives guarantees for new vectors outside of $S$ their method is randomized and can fail to correctly determine an approximate nearest neighbor, while our method is deterministic.

\subparagraph*{Vector quantization.}

In a nutshell, vector quantization~\cite{gersho2012vector} is the process of mapping vectors in a space (usually a bounded subset of Euclidean space) to the nearest in a finite set of vectors~$Q$.
The goal is to minimize the size of $Q$ and the distance between vectors and their quantized versions, often with respect to a certain distribution of source (or input) vectors.
Fischer first described \emph{pyramid vector quantization}~\cite{fischer1986pyramid}, showing that it is near-optimal for Laplacian sources.
Since high-dimensional Laplacian vectors have lengths that are tightly concentrated around the expectation, it is natural to speculate if the method is also near-optimal for fixed-length (or unit) vectors.
It turns out to be easier to analyze a variant of pyramid vector quantization for which we can show that this is indeed the case.
This is described in section~\ref{sec:upper}.

Quantization methods have previously been used to speed up nearest neighbor search.
The technique of \emph{product quantization}~\cite{jegou2011product} has been particularly successful for this application.
Product quantization uses an initial random rotation of input vectors followed by application of an optimal quantization method on low-dimensional blocks.
Since the size of the codebook is fixed for each block the resulting quantization error cannot be bounded with probability 1.

Quantization of the unit sphere has been studied in complexity theory as 
 \emph{$\varepsilon$-nets for spherical caps}.
 Rabani and Shpilka~\cite{rabani2010explicit} give a construction in which $|Q|$ is polynomially related to the best size possible with a given quantization error.
 Along and Klartag~\cite{Alon2017optimal} use such nets to achieve $|Q|$ that is within a factor $\exp(\BOx{d})$ of optimal, improving~\cite{rabani2010explicit} whenever the quantization error is $o(1)$, such that $|Q|$ must be superexponential in~$d$.

In the literature on machine learning (and its application areas) there is a myriad of methods for learning a data-dependent quantization mapping that exploits structure in a data set to decrease the quantization error.
We refer to the survey of Wang et al.~\cite{wang2016learning} for details.
In contrast to such methods, we seek guarantees for all vectors, not just vectors from a given data set or distribution.

\subparagraph*{Communication complexity.}

Consider a communication protocol in which Alice and Bob are given unit vectors $x,y\in \left\{\pm \tfrac{1}{\sqrt{d}}\right\}^d$ and need to approximate $\doot x y$.
The \emph{gap hamming} problem is the special case where the task is to distinguish between cases of weak positive and negative correlation: 
Is $\doot x y > 1/\sqrt{d}$ or is $\doot x y < - 1/\sqrt{d}$?
This problem is known to require $\Omega(d)$ bits of communication~\cite{chakrabarti2012optimal,indyk2003tight,sherstov2012communication}, even for randomized protocols with error probability, say, $1/3$.
In turn, this implies a lower bound for the space complexity of estimating inner products, since a space complexity of $\ell$ bits implies a (one-way) communication protocol using $\ell$ bits of communication.
The lower bound extends to arbitrary thresholds $\alpha$ and $\beta$ with $\alpha - \beta = \Theta(1/\sqrt{d})$ by translation.
In this paper we consider general unit vectors $x,y\in\R^d$ and are able to show a higher lower bound of $\tfrac{1}{2} d \log_2 d - \BOx{d}$ bits for distinguishing inner products of distance $\varepsilon = \Theta(1/\sqrt{d})$.

\section{Upper bound}\label{sec:upper}
We use a well-known grid-based rounding method to construct our representation, see e.g.~\cite{Alon2017optimal,fischer1986pyramid}.
For completeness we provide a simple, self-contained description of a representation and show that it has the properties described in Theorem~\ref{thm:distinguish}.
The grid resolution is controlled by a parameter $\delta\in [0,1]$.
For every vector $x\in\R^d$ let $f(x)\in \R^d$ be defined by
  \begin{equation}\label{eq:pyramid}
  f(x)=x'/||x'||_2, \text{ where } x_i'=\left\lfloor x_i \tfrac{\sqrt{d} } \delta + \tfrac 1 2 \right\rfloor \frac \delta {\sqrt{d}} \text{ for } i=1,\dots,d \enspace .
  \end{equation}

  It is clear that the number of bits for storing a single integer coordinate $x_i'$ can be large in high dimensions, up to $\log_2(2\sqrt{d}/\delta)$ bits, but we can give a much better bound on the \emph{average} number of bits per coordinate.
  If $\|x\|\leq1$ we can store $f(x)$ using $\ell=d\log_2(1 / \delta)+\BOx{d}$ bits of space.
  To compute $x'$ it suffices to know the integers $z_i=\lfloor x_i \tfrac{\sqrt{d} } \delta + \tfrac 1 2 \rfloor$.
  We first allocate $d$ bits to store the set $\{ i \; | \; z_i < 0\}$, such that it only remains to store the sequence of absolute values $|z_1|,\dots,|z_d|$.
     Next, using $\|x\|_2\leq 1$ we observe that 
	 	 \[\sum_{i=1}^d |z_i| \leq \|x\|_1 \tfrac{\sqrt{d}}{\delta} + d/2 \leq \sqrt{d}\,\|x\|_2\,\tfrac{\sqrt{d}}{\delta} + d/2 \leq d / \delta + d/2 \enspace .\]
	      Thus if we set $s=\left\lfloor d/\delta + d/2\right\rfloor$ we can encode $|z_1|,\dots,|z_d|$ by specifying a partitioning of $s$ elements into $d+1$ parts.
     There are ${s+d\choose d}$ such partitionings so we can assign each vector a unique representation of $\ell = \left\lceil \log_2{s+d\choose d}\right\rceil + d
	      = d\log_2\left(1/\delta \right) + \BOx{d}$ bits.

Before proving Theorem~\ref{thm:distinguish} we give a simple space bound for distance distortion which is useful in it own right.

\begin{lemma}
  \label{lem:distortion}
   For $\delta \leq 1$ and every choice of $x,y\in\R^d$, defining $f$ according to (\ref{eq:pyramid}) we have:
   \[\| x-y \|_2- \delta \leq \|f(x)-f(y)\|_2 \leq \|x-y\|_2 + \delta\enspace .\]
\end{lemma}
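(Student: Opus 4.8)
The plan is to trace the distance distortion in~(\ref{eq:pyramid}) back to its only genuine source, the coordinate rounding that turns $x$ into $x'$, and to observe that this rounding displaces any point by at most half the grid spacing \emph{uniformly} over $\R^d$. Concretely, I would prove the two-sided bound for the rounded vectors, reading the quantity $\|f(x)-f(y)\|_2$ in the lemma as $\|x'-y'\|_2$. The feature that makes the statement hold for \emph{every} $x,y\in\R^d$, with no unit-norm hypothesis, is that the per-coordinate rounding error is capped by $\tfrac{\delta}{2\sqrt d}$ no matter how large $x_i$ is; it never grows with $\|x\|_2$. This is exactly the property that a renormalization would destroy, since projecting back onto the sphere contracts every distance below~$2$ and would violate the lower bound $\|x-y\|_2-\delta$ for large-norm inputs. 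So the distance-distortion content of~(\ref{eq:pyramid}) must be located in the grid-rounding step alone, and that is the map I would quantify.

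First I would establish the per-coordinate estimate. By~(\ref{eq:pyramid}), $x_i'$ is the nearest integer multiple of $\delta/\sqrt d$ to $x_i$ (the $+\tfrac12$ inside the floor realizes round-to-nearest), so $|x_i'-x_i|\le \tfrac12\cdot\tfrac{\delta}{\sqrt d}=\tfrac{\delta}{2\sqrt d}$ for each $i$. Summing the squares over all $d$ coordinates gives
\[
\|x'-x\|_2^2=\sum_{i=1}^d (x_i'-x_i)^2 \le d\cdot\frac{\delta^2}{4d}=\frac{\delta^2}{4},
\]
and hence $\|x'-x\|_2\le \delta/2$; the identical argument yields $\|y'-y\|_2\le \delta/2$. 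Crucially, this step invokes nothing about $\|x\|_2$ or $\|y\|_2$, which is precisely why the conclusion survives for arbitrary vectors in $\R^d$.

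The second step is a two-sided triangle inequality. From the decomposition $x'-y'=(x'-x)+(x-y)+(y-y')$ and subadditivity of $\|\cdot\|_2$ I get $\|x'-y'\|_2\le \|x-y\|_2+\|x'-x\|_2+\|y-y'\|_2\le \|x-y\|_2+\delta$, the upper bound. Applying the same manipulation to $x-y=(x-x')+(x'-y')+(y'-y)$ gives $\|x-y\|_2\le \|x'-y'\|_2+\delta$, i.e.\ $\|x'-y'\|_2\ge \|x-y\|_2-\delta$, the lower bound. Combining the two yields the claimed sandwich $\|x-y\|_2-\delta\le\|x'-y'\|_2\le\|x-y\|_2+\delta$.

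I do not anticipate a real technical obstacle: this is a routine additive-perturbation estimate, and both inequalities reduce to the single displacement bound $\|x'-x\|_2\le\delta/2$. The one point demanding care --- and the reason I would phrase the argument around the rounded vectors rather than their normalized versions --- is the quantifier ``for every $x,y\in\R^d$''. That quantifier is compatible only with the unnormalized rounding map, because the bound $\delta/2$ holds irrespective of norm, whereas any projection back onto the sphere breaks the lower bound for large-norm inputs. Making this observation explicit is the main conceptual content, after which the calculation is immediate.
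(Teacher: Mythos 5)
Your per-coordinate bound $\|x-x'\|_2\le\delta/2$ and the two-sided triangle inequality are exactly the first and last steps of the paper's proof, and both are correct. But there is a genuine gap: you have changed the statement rather than proved it. Equation~(\ref{eq:pyramid}) defines $f(x)=x'/\|x'\|_2$, with the renormalization included, and the lemma is about this normalized map; your proof bounds only $\|x'-y'\|_2$ and never addresses the projection back onto the unit sphere. The normalization is not an optional wrinkle that you may discard to make the quantifier work: the problem definition requires the decoder to map into $\S^{d-1}$, and the proof of Theorem~\ref{thm:distinguish} uses $\|f(x)\|_2=\|f(y)\|_2=1$ to write $\doot{f(x)}{f(y)}=\tfrac12\left(2-\|f(x)-f(y)\|_2^2\right)$ and extract the threshold $t$. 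With your unnormalized $x'$ that identity fails and the downstream derivation collapses, so the ``conceptual content'' you located --- that the distortion statement must concern the rounding step alone --- is a misreading of what the lemma is for.

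You are right that the hypothesis ``every $x,y\in\R^d$'' is incompatible with the normalized map (your large-norm counterexample is valid), but the correct resolution is that the lemma is implicitly stated for unit vectors --- the paper's own proof silently assumes $\|x\|_2=1$ when it writes $\|x-f(x)\|_2^2=2-2\doot{x}{x'/\|x'\|_2}$ --- not that the normalization should be dropped. The step your proposal is missing is precisely the paper's middle step: for unit $x$, after establishing $\|x-x'\|_2\le\delta/2$, one must show that renormalizing $x'$ onto the sphere does not (materially) increase the displacement, i.e.\ that $\|x-f(x)\|_2\le\delta/2$ still holds. The paper does this by noting that $\delta\le1$ forces the angle between $x$ and $x'$ to be small and then comparing $\|x-f(x)\|_2$ with $\|x-x'\|_2$; note that this is the only place the hypothesis $\delta\le1$ enters, and your argument never uses it --- a telltale sign that you are proving a different statement. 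This normalization step is also genuinely delicate (the comparison of $\|x-f(x)\|_2$ against $\|x-x'\|_2$ requires the angle control; it is not a free consequence of the rounding bound), and it is the one piece of non-routine content in the lemma, so omitting it leaves the essential work undone.
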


   \begin{proof}

     Observe that $|x_i-x_i'|\leq \frac \delta {2\sqrt{d}}$. This means that:
          \[ \|x-x'\|_2 = \sqrt{\sum_{i=1}^d (x_i-x_i')^2}\leq \sqrt{\sum_{i=1}^d \frac {\delta^2} {4d}} = \frac \delta 2 \enspace . \]
	 	 Since $\delta \leq 1$ the angle between $x$ and $x'$ is bounded by $\pi/3$, and hence $\|x-f(x)\|_2 \leq 1$.
	 This implies that
	 $\|x-f(x)\|^2_2 = 2 - 2\doot{x}{x'/||x'||_2} \leq 1+\|x'\|_2 - 2\doot{x}{x'} =\|x-x'\|^2_2$, and in particular we get $\|x-f(x)\|_2 \leq \|x-x'\|_2 \leq \delta / 2$.
     Finally, using the triangle inequality:
	 	 \begin{align*}
	 \|f(x)-f(y)\|_2 & \leq \|x-f(x)\|_2 + \| x - y \|_2 + \| y-f(y)\|_2 \leq \|x-y\|_2 + \delta  ,\text{ and}\\
     \|f(x)-f(y)\|_2 & \geq\|x-y\|_2-\|x-f(x)\|_2-\|y-f(y)\|_2\geq \|x-y\|_2-\delta \enspace .
	 \end{align*}
\end{proof}
We are now ready to prove Theorem~\ref{thm:distinguish}.
\begin{proof}[Proof of Theorem~\ref{thm:distinguish}]
	Let the encoding function $\enc{\cdot}$ map a vector $x$ to the $\ell$-bit representation of $x'$ as defined in $(\ref{eq:pyramid})$.
    The decoding function $\dec{\cdot}$ is defined such that $\dec{\enc{x}} = f(x)$.

    By Lemma~\ref{lem:distortion} we have:
  \[\max\{0,\| x- y \|- \delta\} \leq \|f(x)-f(y)\| \leq \|x- y\| + \delta\enspace .\]

\noindent
  Using the distance bounds and the identity (\ref{eq:length2inner}) several times we get:
    \begin{align*}
    \doot {f(x)}{f(y)} &= \tfrac 12 (\|f(x)\|^2+\|f(y)\|^2-\|f(x)-f(y)\|^2)\\
                        &\leq\min\{1,\tfrac12 (2+2\|x-y\|\delta-\|x-y\|^2-\delta^2)\}\\
                        &=\min\{1,\doot x y +\|x-y\|\delta-\delta^2/2\}, \text{ and }\\[2ex]
    \doot {f(x)}{f(y)} &\geq\tfrac12 (2-2\|x-y\|\delta-\|x-y\|^2-\delta^2)\\
                        &=\doot x y -\|x-y\|\delta-\delta^2/2 \enspace .
  \end{align*}

  \noindent
  We can then see
    \begin{align*}
      \doot x y\geq \alpha \implies \doot {\dec{\enc{x}}} {\dec{\enc{y}}}&\geq \alpha-\delta\sqrt{2-2\alpha}-\delta^2/2\text{ and}\\
      \doot x y\leq \beta \implies \doot {\dec{\enc{x}}} {\dec{\enc{y}}}&\leq \min\{1,\beta+\delta\sqrt{2-2\beta}-\delta^2/2\}
    \end{align*}
    Setting $\delta=\frac {\alpha-\beta}{2\sqrt{2-2\beta}}$ and $t=\alpha-\delta\sqrt{2-2\alpha}-\delta^2/2$ we get a valid solution to the $(\alpha,\beta)$-{\sc InnerProduct} problem.
\end{proof}

The same grid, as specified by $\delta$, works for every $(\alpha',\beta')$-{\sc InnerProduct} instance (with a suitable choice of threshold $t$) as long as 
		\begin{equation}\label{eq:delta-bound}
	\delta < \frac{\alpha'-\beta'}{\sqrt{2-2\alpha'}+\sqrt{2-2\beta'}} \enspace .
	\end{equation}

Note that Lemma~\ref{thm:main-upper} also follows from Theorem~\ref{thm:distinguish}:
For a given inner product value $p$ consider $\alpha' = p + \varepsilon / 2$ and $\beta' = p - \varepsilon / 2$.
Setting $\delta = \varepsilon / 4$, we satisfy (\ref{eq:delta-bound}) for all $p$ and get that for every choice of unit vectors $x,y\in \S^{d-1}$:
  \[\doot x y - \epsilon \leq \doot {\dec{\enc{x}}}{\dec{\enc{x}}}\leq \doot x y + \epsilon  \enspace. \]

\section{Lower bound}\label{sec:lower}

Consider a one-way communication protocol solving the $(\alpha,\beta)$-{\sc InnerProduct} problem where Alice sends a string $\enc{x}$ to Bob, and Bob must be able to output a real number $p(\enc{x},y)$ and a threshold $t\in\R$ such that 
$$\langle x,y\rangle \geq \alpha \implies p(\enc{x},y)\geq t \text{, and}$$ 
$$\langle x,y\rangle \leq \beta\implies p(\enc{x},y) < t \enspace . $$
Note that there is no requirement on $p(\enc{x},y)$ whenever $\langle x,y\rangle \in (\beta,\alpha)$.
We wish to answer the following question: How many bits must Alice send, in the worst case over all vectors $x$, to allow Bob to answer correctly for every vector $y$?

Let $d>1$ be an integer.
For $\varepsilon > 0$ and $x\in \R^d$ let 
$B_\varepsilon(x) = \{ y\in \R^d \; | \; ||x-y||_ 2\leq\varepsilon \}$ 
be the ball of radius $\varepsilon$ centered at $x$, let $B_1 = B_1(0)$ be the unit ball centered at the origin, and denote by $\text{cap}_{\Theta}(x) = \{ y \in \S^{d-1} \; | \; \langle x, y\rangle \geq \cos \Theta \}$ the unit spherical cap around $x$ with polar angle $\Theta$.

\subsection{Preliminaries}

The gamma function is an extension of the factorial function to complex numbers.
We will need the following formulas for the value of the gamma function on integers and half-integers (see e.g.~\cite{vidunas2005}). For integer $n > 0$:
\begin{equation}\label{eq:gamma}
\begin{split}
\Gamma(n + \tfrac{1}{2}) & = \frac{(2n)!\sqrt{\pi}}{2^{2n} n!}\\
\Gamma(n + 1) & = n!
\end{split}
\end{equation}

\begin{lemma}\label{lem:gamma_fraction}
	For integer $d>1$, $\Gamma(d/2+\tfrac{1}{2}) / \Gamma(d/2+1) > 1/(3\sqrt{d})$.
\end{lemma}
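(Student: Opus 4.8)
The plan is to turn the gamma ratio into a central binomial coefficient using the half-integer values in (\ref{eq:gamma}), and then bound that coefficient by standard estimates, treating even and odd $d$ separately. The starting identity, obtained directly from (\ref{eq:gamma}), is that for every integer $m\geq 1$
\[ \frac{\Gamma(m+\tfrac12)}{\Gamma(m+1)} = \frac{(2m)!\,\sqrt{\pi}}{2^{2m}(m!)^2} = \frac{\sqrt{\pi}}{4^{m}}\binom{2m}{m}. \]
For even $d=2n$ the quantity in the lemma is exactly $\Gamma(n+\tfrac12)/\Gamma(n+1)$, so it equals $\sqrt{\pi}\,4^{-n}\binom{2n}{n}$. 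For odd $d=2n+1$ it equals $\Gamma(n+1)/\Gamma(n+\tfrac32)$; writing $\Gamma(n+\tfrac32)=\Gamma((n+1)+\tfrac12)$ and applying the identity with $m=n+1$ turns this into $4^{n+1}\big/\big((n+1)\sqrt{\pi}\,\binom{2n+2}{n+1}\big)$.

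For the even case I would invoke the lower bound $\binom{2n}{n}\geq 4^{n}/(2\sqrt{n})$, valid for $n\geq1$, to get
\[ \frac{\Gamma(n+\tfrac12)}{\Gamma(n+1)} \geq \frac{\sqrt{\pi}}{2\sqrt{n}} = \sqrt{\frac{\pi}{2d}}, \]
and since $\sqrt{\pi/2}\approx 1.25 > 1/3$ this comfortably beats $1/(3\sqrt{d})$. For the odd case I would instead use the matching upper bound $\binom{2m}{m}\leq 4^{m}/\sqrt{\pi m}$ with $m=n+1$, giving
\[ \frac{\Gamma(n+1)}{\Gamma(n+\tfrac32)} \geq \frac{\sqrt{\pi(n+1)}}{(n+1)\sqrt{\pi}} = \frac{1}{\sqrt{n+1}} = \sqrt{\frac{2}{d+1}}. \]
It then remains to check $\sqrt{2/(d+1)} > 1/(3\sqrt{d})$, which after squaring becomes $18d > d+1$, i.e.\ $17d > 1$, true for all $d\geq1$. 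The odd case is the genuinely binding one, since its bound $\sqrt{2/(d+1)}$ is the smaller of the two, so the constant $1/3$ in the statement is not tight but chosen for cleanliness.

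The only real obstacle is to pin down central binomial coefficient inequalities that hold for every $n\geq1$ rather than merely asymptotically; both bounds above are classical and follow, for instance, from the Wallis product or a short induction, so no delicate estimate is needed once one selects the right form (a denominator scaling like $\sqrt{n}$, rather than the cruder $\binom{2n}{n}\geq 4^n/(2n)$, which would fail for large even $d$). I note that both parities can in fact be handled simultaneously by the one-sided Wendel inequality $\Gamma(x+\tfrac12)\leq\sqrt{x}\,\Gamma(x)$, applied with $x=d/2+\tfrac12$: this yields $\Gamma(d/2+\tfrac12)/\Gamma(d/2+1)\geq 1/\sqrt{(d+1)/2}=\sqrt{2/(d+1)}$ in a single step, avoiding the case split entirely and reducing the lemma to the same elementary inequality $17d>1$.
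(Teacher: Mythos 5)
Your proof is correct, and it takes a somewhat different route from the paper's. The paper handles even $d$ by writing $\Gamma(d/2+\tfrac{1}{2})/\Gamma(d/2+1) = d!\sqrt{\pi}/(2^d((d/2)!)^2)$ and applying the two-sided Stirling bounds $\sqrt{2\pi}\,n^{n+1/2}e^{-n}\le n!\le e\,n^{n+1/2}e^{-n}$ directly to the factorials, obtaining the constant $2\sqrt{2}\pi/e^2\approx 1.20$ in front of $1/\sqrt{d}$, and dismisses odd $d$ as ``similar''. You instead expose the central binomial coefficient (the paper's expression is the same object in disguise: $\sqrt{\pi}\,4^{-d/2}\binom{d}{d/2}$ for even $d$) and quote the classical estimates $4^n/(2\sqrt{n})\le\binom{2n}{n}\le 4^n/\sqrt{\pi n}$, which are exactly what the two parities require; all your algebra checks out, including the odd-case identity $\Gamma(n+1)/\Gamma(n+\tfrac{3}{2}) = 4^{n+1}/\bigl((n+1)\sqrt{\pi}\,\binom{2n+2}{n+1}\bigr)$ and the final reduction $18d>d+1$. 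Your version buys two things: the odd case is carried out explicitly --- and you correctly identify it as the binding one, bound $\sqrt{2/(d+1)}$ versus $\sqrt{\pi/(2d)}$, a point the paper's ``similar'' quietly skips --- and your closing observation via Wendel's inequality $\Gamma(x+\tfrac{1}{2})\le\sqrt{x}\,\Gamma(x)$ with $x=d/2+\tfrac{1}{2}$ (immediate from log-convexity, $\Gamma(x+\tfrac{1}{2})\le\Gamma(x)^{1/2}\Gamma(x+1)^{1/2}=\sqrt{x}\,\Gamma(x)$) is genuinely slicker: one line, no parity split, same endgame $17d>1$. What the paper's route buys in exchange is self-containedness given only the stated factorial bounds. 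One small caution on your cited binomial estimates: both do hold for every $n\ge 1$, but the upper bound $\binom{2m}{m}\le 4^m/\sqrt{\pi m}$ resists the most naive induction; the clean argument is that $a_m=\sqrt{m}\,\binom{2m}{m}/4^m$ is increasing with limit $1/\sqrt{\pi}$ (Wallis), which yields both of your bounds at once since $a_1=\tfrac{1}{2}$ --- this is consistent with your remark that Wallis is the right tool, but it is worth stating explicitly, and the Wendel route makes even this unnecessary.
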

\begin{proof}
	We use Stirling's approximation to the factorial:
		$$ \sqrt{2\pi} n^{n+1/2} e^{-n} \leq n! \leq e\, n^{n+1/2} e^{-n}$$
		Together with (\ref{eq:gamma}) we get, for even $d$:
		\begin{alignat*}{1}
		\Gamma(d/2+\tfrac{1}{2}) / \Gamma(d/2+1)  
		= \frac{d!\sqrt{\pi}}{2^d((d/2)!)^2}
		\geq \frac{\sqrt{2}\pi \, d^{d+1/2} e^{-d}}{2^d e^2 (d/2)^{d+1} e^{-d}}
		 = \frac{2\sqrt{2}\pi}{e^2 \sqrt{d}} 
		 > 1/(3\sqrt{d}) \enspace .
	\end{alignat*}
	The case of odd $d$ is similar.
\end{proof}

\begin{lemma}\label{lem:vol}
	Let $c_d = \frac{\pi^{d/2}}{\Gamma(d/2+1)}$, $r>0$, and $\delta\in(0,1)$. Then:
\begin{enumerate}
\item\label{eq:1} The volume of $B^d_r$, the $d$-dimensional ball of radius $r$, is 
$c_d\, r^d$.
\item\label{eq:2} The surface area of $\S^{d-1}_r$, the $d$-dimensional sphere of radius $r$, is
$c_{d}\, d\, r^{d-1}$.
\item\label{eq:3} The surface area of $\text{cap}^d_{\Theta}(x)$, a unit spherical cap with polar angle $\Theta$, is at most $$c_{d-1}\, d\, (2(1 - \cos \Theta))^{(d-1)/2} \enspace .$$
\end{enumerate}
\end{lemma}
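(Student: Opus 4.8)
The plan is to treat parts~\ref{eq:1} and~\ref{eq:2} together from first principles and then build the cap bound of part~\ref{eq:3} on top of part~\ref{eq:2}. For the first two parts I would start from the Gaussian integral identity $\pi^{d/2} = \int_{\R^d} e^{-\|y\|_2^2}\, dy$ and evaluate the right-hand side in spherical coordinates. Writing $S(r)$ for the surface area of $\S^{d-1}_r$ and using the scaling relation $S(r) = S(1)\, r^{d-1}$, the integral becomes $S(1)\int_0^\infty e^{-r^2} r^{d-1}\, dr = \tfrac12 S(1)\,\Gamma(d/2)$. Solving gives $S(1) = 2\pi^{d/2}/\Gamma(d/2) = d\,c_d$, where the last step uses $\Gamma(d/2+1) = (d/2)\Gamma(d/2)$; scaling then yields part~\ref{eq:2}. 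Part~\ref{eq:1} follows immediately by integrating the surface area over radii, $\int_0^r c_d\, d\,\rho^{d-1}\, d\rho = c_d r^d$ (equivalently, the volume is an antiderivative of the surface area).

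For part~\ref{eq:3} I would first express the cap area as a one-dimensional integral over the polar angle. Slicing $\text{cap}^d_\Theta(x)$ by the hyperplanes $\doot x y = \cos\phi$ for $\phi\in[0,\Theta]$, each slice is a $(d-2)$-sphere of radius $\sin\phi$, and sweeping the polar angle is unit speed on the unit sphere, so the area equals $A\int_0^\Theta (\sin\phi)^{d-2}\, d\phi$, where $A$ is the surface area of the unit $(d-2)$-sphere. Applying part~\ref{eq:2} in dimension $d-1$ gives $A = (d-1)\,c_{d-1}$, so it remains to bound $\int_0^\Theta (\sin\phi)^{d-2}\, d\phi$ by roughly $\tfrac{1}{d-1}(2(1-\cos\Theta))^{(d-1)/2}$.

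The crux --- and the step I expect to be the main obstacle --- is this last integral estimate, which I would handle with the substitution suggested by the target expression. Writing $g(\phi) = 2(1-\cos\phi)$, so that $g'(\phi) = 2\sin\phi$ and $\sqrt{g(\phi)} = 2\sin(\phi/2) \ge \sin\phi$, I would bound $(\sin\phi)^{d-2} = (\sin\phi)^{d-3}\sin\phi \le g(\phi)^{(d-3)/2}\sin\phi = \tfrac12 g(\phi)^{(d-3)/2} g'(\phi)$ for $d\ge 3$, at which point the integral telescopes to $\tfrac{1}{d-1} g(\Theta)^{(d-1)/2} = \tfrac{1}{d-1}(2(1-\cos\Theta))^{(d-1)/2}$. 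Multiplying by $(d-1)c_{d-1}$ gives the bound $c_{d-1}(2(1-\cos\Theta))^{(d-1)/2}$, which is in fact a factor of $d$ stronger than the stated bound; the extra factor $d$ in the statement is harmless slack that also absorbs the one low-dimensional edge case $d=2$, where $(\sin\phi)^{d-3}$ has a negative exponent and the pointwise bound reverses. I would dispatch that case by the direct computation that the cap ``area'' $2\Theta$ is at most $8\sin(\Theta/2)$ on $(0,\pi]$.
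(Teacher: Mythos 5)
Your proposal is correct, and it takes a genuinely different --- and in fact sharper --- route than the paper. For parts 1 and 2 the paper cites the ball-volume formula and differentiates in $r$ to get the sphere area; you go the other way, deriving the area from the Gaussian integral $\pi^{d/2}=\int_{\R^d}e^{-\|y\|_2^2}\,dy$ and integrating over radii to recover the volume, which is fully self-contained, and your computation $S(1)=2\pi^{d/2}/\Gamma(d/2)=d\,c_d$ checks out. The real divergence is part 3. The paper slices the cap by height $h$, bounds the slice radius by $r(h)<\sqrt{2h}$, and then bounds the height integral crudely by the interval length times the maximum of the integrand, absorbing the leftover factor $1-\cos\Theta\le 1$; moreover its slice ``area'' $c_{d-1}\,d\,r^{d-1}$ is not the exact $(d-2)$-sphere measure and omits the slant factor, so the paper's integral representation is itself informal (though its final bound is true). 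You instead start from the exact polar-angle formula $c_{d-1}(d-1)\int_0^\Theta \sin^{d-2}\phi\,d\phi$ and evaluate the estimate cleanly: with $g(\phi)=2(1-\cos\phi)=4\sin^2(\phi/2)$ you have $\sin\phi\le 2\sin(\phi/2)=\sqrt{g(\phi)}$ and $g'(\phi)=2\sin\phi$, so for $d\ge 3$ the integrand is at most $\tfrac12 g^{(d-3)/2}g'$ and the integral telescopes exactly to $\tfrac{1}{d-1}\,g(\Theta)^{(d-1)/2}$, yielding $c_{d-1}(2(1-\cos\Theta))^{(d-1)/2}$ --- a factor $d$ stronger than the lemma states, which is harmless slack as you say (downstream it would only change the code-size lower bound by an additive $O(\log d)$ term). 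Your $d=2$ patch is also sound: the cap is an arc of length $2\Theta$, the stated bound specializes to $c_1\cdot 2\cdot\sqrt{2(1-\cos\Theta)}=8\sin(\Theta/2)$ since $c_1=2$, and $2\Theta\le 8\sin(\Theta/2)$ follows from $\sin t\ge(2/\pi)t\ge t/2$ on $[0,\pi/2]$ --- an edge case the paper's argument glosses over. In short, both proofs slice the cap, but your exact antiderivative argument replaces the paper's max-times-length estimate, buys a factor of $d$, and is arguably the more rigorous of the two.
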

\begin{proof}
	Volume bound \ref{eq:1}.~is standard, see e.g.~\cite[page 11]{pisier1999volume}.
	Differentiating with respect to~$r$ gives the surface area in line~\ref{eq:2}.
	For the upper bound \ref{eq:3}.~we express the surface area as an integral.
	Let $r(h)=\sqrt{h(2-h)} < \sqrt{2h}$ be the radius of the $d-1$-dimensional sphere at the base of the cap of height $h$.
	Note that the sphere has surface area $c_{d-1}\, d\, r(x)^{d-1}$.
	Integrating over cap heights from $0$ to $1-\cos\Theta$ we bound the surface area:
		$$\int^{1 - \cos \Theta}_0 c_{d-1}\, d\, r(x)^{d-1} dx \leq (1 - \cos \Theta) c_{d-1}\, d\, (2(1 - \cos \Theta))^{(d-1)/2} \enspace . $$
		The inequality uses that $r(x)\leq r(1 - \cos \Theta) \leq \sqrt{2(1 - \cos \Theta)}$ for $x\in [0,1 - \cos \Theta]$.
\end{proof}

\begin{lemma}
	For every $\Theta \in (0,\pi/2)$ there exists a code $\mathcal{C}_\Theta \subset S^{d-1}$ of size 
		\[|\mathcal{C}_\Theta| \geq (2(1-\cos\Theta))^{(d-1)/2} / (3\sqrt{d}) \] 
		such that for all $x,y\in\mathcal{C}_\Theta$ with $x\ne y$ we have $\langle x, y\rangle \geq \cos\Theta$.
\end{lemma}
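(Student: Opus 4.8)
The plan is to place the entire code inside one spherical cap, which makes the required inner‑product condition automatic, and then to lower bound its cardinality by a volume (packing) argument. Fix an arbitrary pole $p\in S^{d-1}$ and work inside $\text{cap}_{\Theta/2}(p)$, the cap of polar angle $\Theta/2$. For any two points $x,y$ in this cap the angular triangle inequality gives $\angle(x,y)\le\angle(x,p)+\angle(p,y)\le\Theta$, so $\langle x,y\rangle\ge\cos\Theta$ for every pair. Hence \emph{any} subset of $\text{cap}_{\Theta/2}(p)$ is already a feasible code, and the only real task is to produce one of the claimed cardinality.

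To that end I would take $\mathcal{C}_\Theta$ to be a maximal $\rho$-separated subset of $K=\text{cap}_{\Theta/2}(p)$, where $\rho$ is a small constant angle independent of $d$ and $\Theta$. Maximality forces the caps of polar angle $\rho$ centred at the chosen points to cover $K$, so $|\mathcal{C}_\Theta|\ge \text{area}(K)/\text{area}(\text{cap}_\rho)$. For the denominator I apply the cap-area bound of Lemma~\ref{lem:vol}, which once $\rho$ is fixed contributes the constant $(2(1-\cos\rho))^{(d-1)/2}$ raised to the $(d-1)$-st power. For the numerator I use that $\text{area}(K)\ge c_{d-1}\sin^{d-1}(\Theta/2)=c_{d-1}\,(2(1-\cos\Theta))^{(d-1)/2}/2^{d-1}$, since the curved surface of the cap has at least the area of the flat $(d-1)$-disk spanning its boundary circle (of radius $\sin(\Theta/2)$). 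Taking the ratio, the factors of $c_{d-1}$ cancel and the entire dependence on $\Theta$ collapses to exactly $(2(1-\cos\Theta))^{(d-1)/2}$, as in the target bound.

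What remains is to absorb the dimension-dependent constants into the prefactor $1/(3\sqrt d)$, and this is where Lemma~\ref{lem:gamma_fraction} enters: it controls the half-integer gamma ratio governing $c_d$ and $c_{d-1}$, turning the crude estimates above into the clean $\sqrt d$ in the denominator and fixing an admissible constant $\rho$. I expect the genuine difficulty to lie here, in the volume estimate rather than the geometry. Lemma~\ref{lem:vol} supplies only an \emph{upper} bound on a cap's area, so the delicate step is to secure a matching \emph{lower} bound on $\text{area}(K)$ that is strong enough for every $d>1$: the flat-disk estimate is essentially tight for small $\Theta$ but loosens as $\Theta\to\pi/2$, so some care — or a direct comparison against the full sphere-area formula of Lemma~\ref{lem:vol} together with the gamma estimate — will be needed to make a single constant $\rho$ work uniformly.
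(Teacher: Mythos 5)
Your construction proves the statement as literally printed, but the printed statement contains two misprints, and the version you have proved is the opposite of what the paper needs. Look at the paper's own proof and at how the lemma is used: the greedy (Gilbert--Varshamov-style) construction adds points from $\S^{d-1}\setminus\bigcup_{x\in\mathcal{C}_\Theta}\text{cap}_\Theta(x)$, so every pair of distinct codewords has angle \emph{greater} than $\Theta$, i.e.\ $\langle x,y\rangle\le\cos\Theta$, not $\ge\cos\Theta$; and the cardinality bound actually derived, and then invoked in the space-complexity section via $\ell\ge\log_2\left((2(1-\cos\Theta))^{-(d-1)/2}/(3\sqrt d)\right)$, has exponent $-(d-1)/2$. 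In the regime where the lemma is applied, $\Theta\le\tfrac{\pi}{2}\tfrac{\alpha-\beta}{\sqrt{1-\beta}}$ is small, so $2(1-\cos\Theta)<1$ and the bound with the positive exponent is less than $1$ --- a vacuous guarantee satisfiable by a single point. That vacuity is the signal that the inequality direction and the exponent are typos rather than the intended content.

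Because you took the misprint at face value, your approach --- cluster every codeword inside $\text{cap}_{\Theta/2}(p)$ so that pairwise inner products are automatically at least $\cos\Theta$, then count by a covering argument --- proves a lemma that cannot support Theorem~\ref{thm:main-lower}. The lower bound needs many \emph{well-separated} points: for distinct $x_1,x_2$ at angle at least $\Theta$ one constructs the witness $y(x_1,x_2)$ of (\ref{eq:y-def}) with $\langle x_1,y\rangle=\beta$ and $\langle x_2,y\rangle\ge\alpha$, which forces $\enc{x_1}\ne\enc{x_2}$ and hence $2^\ell\ge|\mathcal{C}_\Theta|$. A dense cluster of nearly identical vectors admits no such witnesses and forces no distinct encodings, so nothing about $\ell$ follows from it. The missing idea is the packing argument: greedily pick points at pairwise angle exceeding $\Theta$ until the caps $\text{cap}_\Theta(x)$ cover the sphere; the number of points is then at least the ratio of the sphere's surface area to one cap's area, which the cap-area upper bound of Lemma~\ref{lem:vol} together with Lemma~\ref{lem:gamma_fraction} bounds below by $(2(1-\cos\Theta))^{-(d-1)/2}/(3\sqrt d)$. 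For what it is worth, your argument for the literal statement is essentially sound --- the projection bound $\text{area}(\text{cap}_{\Theta/2})\ge c_{d-1}\sin^{d-1}(\Theta/2)$ plus a maximal $\rho$-separated subset with $1-\cos\rho$ a sufficiently small constant does yield $(2(1-\cos\Theta))^{(d-1)/2}/(3\sqrt d)$ --- but it answers the wrong question, and repairing it into the needed lemma means abandoning the cap-clustering idea entirely in favor of the packing bound.
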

\begin{proof}
	We follow the outline of the standard non-constructive proof of the Gilbert-Varshamov bound.
	That is, we argue that $\mathcal{C}_\Theta$ can be constructed in a greedy manner by adding an additional point from 
		$$\S^{d-1} \backslash \bigcup_{x\in\mathcal{C}_\Theta} \text{cap}^d_\Theta(x)$$ 
		until this set is empty.
	Clearly this construction produces a set $\mathcal{C}_\Theta$ with the property that every pair of points have angle greater than $\Theta$.
	We observe that the procedure can stop only when the area of $\bigcup_{x\in\mathcal{C}_\Theta} \text{cap}_{\Theta}(x)$ exceeds that of $\S^{d-1}$.
	The number of iterations, and thus the size of $\mathcal{C}_\Theta$ is at least the ratio between the surface area of the unit sphere and a spherical cap, $\text{cap}_{\Theta}(\cdot)$.
	In turn, this is lower bounded by the ratio of bound number~\ref{eq:2}~(with $r=1$) and~\ref{eq:3}~from Lemma~\ref{lem:vol}. Using Lemma~\ref{lem:gamma_fraction} we get:
		$$ \frac{c_d\, d}{c_{d-1}\, d\, (2(1-\cos\Theta))^{(d-1)/2}} \geq (2(1-\cos\Theta))^{-(d-1)/2} / (3\sqrt{d}) \enspace .$$
\end{proof}

\subsection{Space complexity}

Define $\Theta = \arccos \beta - \arccos \alpha$. 
We claim that for every pair of vectors $x_1,x_2\in \S^{d-1}$ with angle $\theta = \arccos \, \langle x_1, x_2\rangle \geq \Theta$ there exists a vector $y\in \S^{d-1}$ such that $\langle x_1, y\rangle =\beta$ and $\langle x_2, y\rangle \geq \alpha$.
Specifically, let
\begin{equation}\label{eq:y-def}
y = y(x_1,x_2)
= \left(\beta - \sqrt{1-\beta^2}\, \tfrac{\cos\theta}{\sin\theta}\right) x_1 + \tfrac{\sqrt{1-\beta^2}}{\sin\theta} \, x_2 \enspace .
\end{equation}
To see that $y$ is indeed a unit vector we compute $||y||_2^2 = \langle y,y\rangle$.
Since $\langle x_1, x_2\rangle = \cos\theta$ (by definition of $\theta$) and $\langle x_1, x_1\rangle = \langle x_2, x_2\rangle = 1$ (since $x_1$ and $x_2$ are unit vectors) we get:
\begin{align*}
\langle y,y\rangle & = \left(\beta - \sqrt{1-\beta^2} \, \tfrac{\cos\theta}{\sin\theta}\right)^2 + \left(\tfrac{\sqrt{1-\beta^2}}{\sin\theta}\right)^2 + 2 \left(\beta - \sqrt{1-\beta^2}\, \tfrac{\cos\theta}{\sin\theta}\right) \left(\tfrac{\sqrt{1-\beta^2}}{\sin\theta}\right) \cos\theta\\
 & = \beta^2 + \tfrac{1-\beta^2}{\sin^2\theta} (1-\cos^2\theta)\\
 & = \beta^2 + (1-\beta^2) = 1 \enspace .
\end{align*}
The third equality uses the Pythagorean identity $\cos^2\theta + \sin^2\theta = 1$.
Next, we check that $y$ has the claimed inner products with $x_1$ and $x_2$:
\begin{align*}
\langle x_1, y \rangle & = \left(\beta - \sqrt{1-\beta^2}\, \tfrac{\cos\theta}{\sin\theta}\right) + \sqrt{1-\beta^2}\, \tfrac{\cos\theta}{\sin\theta} = \beta,\\
\langle x_2, y \rangle 
& = \left(\beta - \sqrt{1-\beta^2}\, \tfrac{\cos\theta}{\sin\theta}\right) \cos\theta + \tfrac{\sqrt{1-\beta^2}}{\sin\theta}\\
& = \langle (\beta, \sqrt{1-\beta^2}), (\cos\theta,\tfrac{1-\cos^2\theta}{\sin\theta})\rangle\\
& = \langle (\beta, \sqrt{1-\beta^2}), (\cos\theta,\sin\theta)\rangle
\geq \alpha \enspace .
\end{align*}
The final inequality follows since the angle between the vectors $(\beta, \sqrt{1-\beta^2})$ and $(\cos\theta,\sin\theta)$ is at most $\arccos(\beta) - \theta \leq \arccos(\beta) - \Theta = \arccos\alpha$.

\medskip

Now consider the code $\mathcal{C}_\Theta$.
For distinct $x_1,x_2\in \mathcal{C}_\Theta$ we must have, for $y = y(x_1,x_2)$ as defined in (\ref{eq:y-def}), $p(e(x_1),y) < t \leq p(e(x_2),y)$, which means that $e(x_1)\ne e(x_2)$.
Hence $R = \{ \enc{x} \; | \; x\in \S^{d-1}\}$ must contain at least $|\mathcal{C}_\Theta|$ binary strings, and in particular 
\begin{align*}
	\ell \geq \log_2 |\mathcal{C}_\Theta| 
	& \geq \log_2\left( (2(1-\cos\Theta))^{-(d-1)/2} / (3\sqrt{d}) \right)\\
	& \geq \tfrac{d}{2} \log_2\left(\frac{1}{1-\cos\Theta}\right) - \BO{d}\\
	& \geq d \log_2 (1/\Theta) - \BO{d} \enspace .
\end{align*}
For the final inequality we use that $1-\cos\Theta \leq \Theta^2$, which holds when $\Theta \in (0,\pi/2)$.
To finish the proof we will show that whenever $0 \leq \beta \leq \alpha\leq 1$:
\begin{equation}\label{eq:arccos}
\Theta = \arccos\beta - \arccos\alpha \leq \tfrac{\pi}{2} \tfrac{\alpha-\beta}{\sqrt{1-\beta}} \enspace .
\end{equation}
For each fixed $\beta\in [0,1]$ we must show that $\arccos \alpha \geq \arccos\beta - \tfrac{\pi}{2} \tfrac{\alpha-\beta}{\sqrt{1-\beta}}$ for all $\alpha\in [\beta,1]$.
Since $\alpha \mapsto \arccos\alpha$ is concave for $\alpha\in [0,1]$, and the function $\alpha\mapsto \arccos\beta - \tfrac{\pi}{2} \tfrac{\alpha-\beta}{\sqrt{1-\beta}}$ is linear, it suffices to check the inequality at the endpoints where $\alpha=\beta$ and $\alpha = 1$.
In the former case we clearly get equality.
In the latter we use the fact $\arccos \beta - \tfrac{\pi}{2} \sqrt{1-\beta} \leq 0$ for $\beta\in [0,1]$ 
to see that the inequality (\ref{eq:arccos}) holds.

This proves that $\ell \geq d\log_2\left(\frac{\sqrt{1-\beta}}{\alpha-\beta}\right) - \BOx{d}$ bits are needed, establishing Theorem~\ref{thm:main-lower}.

\section{Experiments}

Since the encoding in our upper bound is potentially practical, we evaluated the accuracy of the method experimentally on several data sets.
We also computed the space usage of each set of vectors, based on an optimal encoding of the method of Section~\ref{sec:upper} with various values of parameter $\delta$, and compared it to a baseline of using 32-bit floating point numbers.

We considered three data sets, MNIST (handwritten digits), SIFT (image features), and DLIB-FACES (neural net embeddings of faces on a 128-dimensional unit sphere).
The data sets are chosen to span different distributions of entry magnitude.
Whereas MNIST vectors are approximately sparse, DLIB vectors have smoothly distributed magnitudes.
MNIST and SIFT are not natively unit vectors, so we normalized the vectors prior to encoding.

We computed the inner product error for 2000 vector pairs in each data set.
Table~\ref{experiments} shows the maximum absolute error observed when calculating inner products using the decoded vectors compared to using the original vectors.
It also shows the median absolute error and the error at the 90th percentile.

In all cases the observed errors are well below the worst case bound of~$\epsilon = 4\delta$.
This can partly be explained by sparsity of vectors, since our method represents zero entries in vectors with no error.
Also, while the effect of $d$ rounding errors is $d$ times the individual error in the worst case (which is what our theoretical results bound), in the typical case errors will tend to cancel since not all errors go in the same direction.
Heuristically we could imagine that errors are independent and random, in which case we would expect the sum of all errors to be proportional to $\sqrt{d}$ rather than $d$.

\begin{table}[t]
\centering

\begin{tabular}{lll|c|lll}
 Dataset & $d$      &  $\delta$  & space 	& median  & 90th pct.       & max                        \\\hline
  MNIST  & 784      &  0.1         & 16\% 	& 0.0019  &  0.0052         & 0.0165              \\
  SIFT   & 128      &  0.1         & 16\% 	& 0.0034  &  0.0084         & 0.0181               \\
  DLIB   & 128      &  0.1         & 16\% 	& 0.0028  &  0.0071         & 0.0135              \\\hline
  MNIST  & 784      &  0.05        & 19\% 	& 0.0008  &  0.0027         & 0.0103               \\
  SIFT   & 128      &  0.05        & 19\% 	& 0.0019  &  0.0044         & 0.0091               \\
  DLIB   & 128      &  0.05        & 19\% 	& 0.0012  &  0.0033         & 0.0075              \\\hline
  MNIST  & 784      &  0.01        & 26\% 	& 0.0002  &  0.0007         & 0.0027              \\
  SIFT   & 128      &  0.01        & 26\% 	& 0.0004  &  0.0008         & 0.0100               \\
  DLIB   & 128      &  0.01        & 26\% 	& 0.0002  &  0.0005         & 0.0011             
\end{tabular}
\caption{\small Space usage and absolute inner product error using the method of Section~\ref{sec:upper} on pairs of vectors from various real-life data sets.
The space usage is measured against a baseline of using a 32-bit floating point numbers to represent each of the $d$ dimensions.
Observed errors are smaller than our worst-case bound of $4\delta$, probably due to cancellation effects.}\label{experiments}
\end{table}

\section{Conclusion}

We have established tight upper and lower bounds for the problem of representing unit vectors such that inner products can be estimated within a given additive error (with probability~1).
An interesting possibility would be to consider relative error estimates of Euclidean distances (as in the recent work~\cite{indyk2018approximate}) while not allowing any failure probability. Another potential direction would be to achieve provably smaller \emph{expected} error, while preserving the worst-case guarantees, by applying an initial random rotation to all data vectors.

\bibliography{bibliography}
\end{document}